\theoremstyle{plain}
\newtheorem{theorem}{Theorem}[section]
\theoremstyle{plain}
\theoremstyle{definition}
\newtheorem{example}{Example}[section]
\numberwithin{equation}{section}
\title[Optimal portfolios, consumption and  insurance]
 {On the optimal investment-consumption and life insurance selection problem with an external stochastic factor }
\date{\today}
\begin{document}

\author{Calisto Guambe}
\address{Department of Mathematics and Applied Mathematics, University of Pretoria, 0002, South Africa}
\address{Department of Mathematics and Informatics, Eduardo Mondlane University, 257, Mozambique}

\email{calistoguambe@yahoo.com.br}

\author{ Rodwell Kufakunesu }

\address{Department of Mathematics and Applied Mathematics
, University of Pretoria, 0002, South Africa}

\email{rodwell.kufakunesu@up.ac.za}

\keywords{
 Optimal investment consumption insurance, Jump-diffusion, BSDE, Maximum principle, Stochastic volatility.}

\begin{abstract} In this paper, we study a stochastic optimal control problem with stochastic volatility. We prove the sufficient and necessary maximum principle for the proposed problem. Then we apply the results to solve an investment, consumption and life insurance problem with stochastic volatility, that is, we consider a wage earner investing in one risk-free asset and one risky asset described by a jump-diffusion process and has to decide concerning consumption and life insurance purchase. We assume that the life insurance for the wage earner is bought from a market composed of $M>1$ life insurance companies offering pairwise distinct life insurance contracts. The goal is to maximize the expected utilities derived from the consumption, the legacy in the case of a premature death and the investor's terminal wealth.
\end{abstract}

\maketitle
\section{Introduction}
The problem of a wage earner who wants to invest and protect his dependents for a possible premature death has gained much interest in recent times. Since the research paper on portfolio optimization and life insurance purchase by Richard \cite{Richard} appeared,  a number of works in this direction have been reported in the literature. For instance, Pliska and Ye \cite{Pliska} studied an optimal consumption and life insurance contract for a problem described by a risk-free asset. Duarte {\it et al.} \cite{Duarte} considered a problem of a wage earner who invests and buys a life insurance in a financial market with $n$ diffusion risky shares. Similar works include (Guambe and Kufakunesu \cite{guambe}, Huang {\it et al.} \cite{Huang}, Liang and Guo \cite{liang}, Shen and Wei \cite{Shen}, among others). In all the above-mentioned papers, a single life insurance contract was considered.

Recently, Mousa {\it et al.} \cite{mousa}, extended Duarte {\it et al.} \cite{Duarte} to consider a wage earner who buys life insurance contracts from $M>1$ life insurance companies. Each insurance company offers pairwise distinct contracts. This allows the wage earner to compare the premiums insurance ratio of the companies and buy the amount of life insurance from the one offering the smallest premium-payout ratio at each time. Using a dynamic programming approach, they solved the optimal investment, consumption and life insurance contracts in a financial market comprised by one risk-free asset and $n$ risky shares driven by diffusion processes. In this paper, we extend their work to a jump-diffusion setup with stochastic volatility. This extension is motivated by the following reasons: First, the existence of high frequency data on the empirical studies carried out by Cont \cite{cont}, Tankov \cite{tankov} and references therein, have shown that the analysis of price evolution reveals some sudden changes that cannot be explained by models driven by diffusion processes. Another reason is related to the presence of volatility clustering in the distribution of the risky share process, i.e., large changes
in prices are often followed by large changes and small changes tend to be followed by
small changes.

To enable a full capture of these and other aspects, we consider a jump diffusion model with stochastic volatility similar to that in Mnif \cite{Mnif}. Using Dynamic programming approach, Mnif \cite{Mnif} proved the existence of a smooth solution of a semi-linear integro-Hamilton-Jacobi-Bellman (HJB) for the exponential utility function. Zeghal and Mnif \cite{zeghal} considered the same problem for power utility case. Under some particular assumptions, they also derived the backward stochastic differential equation (BSDE) associated with the semi-linear HJB. The drawback of the dynamic programming approach is that it requires the system to be Markovian. To overcome this limitation, a maximum principle approach is proposed to solve this stochastic volatility jump-diffusion problem. This approach allows to solve this problem in a more general setting. We prove a sufficient and necessary maximum principle in a general stochastic volatility problem. Then we apply this framework to solve the wage earner investment, consumption and life insurance problem described earlier. In the literature, the maximum principle approach has been widely reported, see, for instance, Framstad {\it et. al.} \cite{framstard}, \O ksendal and Sulem \cite{Oksendal}, An and \O ksendal \cite{An-oksendal}, Pamen \cite{pamen2015}, Pamen and Momeya \cite{pamen2017}, among others. The main contribution of this paper is the use of factor model in the investment, consumption and life insurance contract problem as well as the inclusion of jumps in the modeling framework.

The rest of the paper is organized as follows: in Section 2, we introduce our control problem and state the sufficient and necessary maximum principle for a stochastic control problem with stochastic volatility, where the proofs are given in the Appendix. In Section 3,  we give the characterization of the optimal strategies for  an investment, consumption and life insurance problem applying the results of Theorem \ref{sufficientmp}. Finally, we consider an example of a linear pure jump stochastic volatility model of Ornstein-Uhlenbeck type and derive an explicit optimal portfolio.

\section{Maximum principle for stochastic optimal control problem with stochastic volatility}

Let $T<\infty$ be a finite time horizon investment period, which can be viewed as a retirement time of an investor. Consider two independent Brownian motions $\{W_1(t);W_2(t),\, 0\leq t\leq T\}$ associated to the complete filtered probability space $(\Omega^W,\mathcal{F}^W,\{\mathcal{F}_t^W\},\mathbb{P}^W)$. Furthermore, we consider a Poisson process $N$ independent of  $W_1$ and $W_2$, associated with the complete filtered probability space $(\Omega^N,\mathcal{F}^N,\{\mathcal{F}_t^N\},\mathbb{P}^N)$ with the intensity measure $dt\times d\nu(z)$, where $\nu$ is the $\sigma$-finite Borel measure on $\mathbb{R}\setminus\{0\}$. A $\mathbb{P}^N$-martingale compensated Poisson random measure is given by:
\begin{equation*}
    \tilde{N}(dt,dz):=N(dt,dz)-\nu(dz)dt\,.
\end{equation*}

We define the product space:
$$(\Omega,
\mathcal{F},\{\mathcal{F}_t\}_{0\leq t\leq T},\mathbb{P}):=(\Omega^W\otimes\Omega^N,\mathcal{F}^W\otimes\mathcal{F}^N, \{\mathcal{F}^W\otimes\mathcal{F}^N\},\mathbb{P}^W\otimes\mathbb{P}^N)
$$ where
$\{\mathcal{F}_t\}_{t\in[0,T]}$ is a filtration satisfying the usual conditions.

Suppose that the dynamics of the state process is given by the following stochastic differential equation (SDE)
\begin{eqnarray}\label{stateprocess}
  dX(t) &=& b(t,X(t),Y(t),\pi(t))dt+ \sigma(t,X(t),Y(t),\pi(t))dW_1(t)  \\ \nonumber
   && +\beta(t,X(t),Y(t),\pi(t))dW_2(t) +\int_{\mathbb{R}}\gamma(t,X(t),Y(t),\pi(t),z)\tilde{N}(dt,dz)\,; \\ \nonumber
  X(0) &=& x\in\mathbb{R}\,,
\end{eqnarray}
where the external economic factor $Y$ is given by
\begin{equation}\label{externalgeneral}
    dY(t)=\varphi(Y(t))dt+\phi(Y(t))dW_2(t)\,.
\end{equation}

We assume that the functions $b,\sigma,\beta:[0,T]\times\mathbb{R}\times\mathbb{R}\times\mathcal{A}\rightarrow\mathbb{R}$; $\gamma:[0,T]\times\mathbb{R}\times\mathbb{R}\times\mathcal{A}\times\mathbb{R}\rightarrow\mathbb{R}$; $\varphi,\phi:\mathbb{R}\rightarrow\mathbb{R}$ are given predictable processes, such that \eqref{stateprocess} and \eqref{externalgeneral} are well defined and \eqref{stateprocess} has a unique solution for each $\pi\in\mathcal{A}$. Here, $\mathcal{A}$ is a given closed set in $\mathbb{R}$.

Let $f:[0,T]\times\mathbb{R}\times\mathbb{R}\times\mathcal{A}\rightarrow\mathbb{R}$ be a continuous function and $g:\mathbb{R}\times\mathbb{R}\rightarrow\mathbb{R}$ a concave function. We define the performance criterion by
\begin{equation}\label{functionalgeneral}
    \mathcal{J}(\pi)=\mathbb{E}\Bigl[\int_0^Tf(t,X(t),Y(t),\pi(t))dt+g(X(T),Y(T))\Bigl]\,.
\end{equation}
We say that $\pi\in\mathcal{A}$ is an admissible strategy if \eqref{stateprocess} has a unique strong solution and
$$
\mathbb{E}\Bigl[\int_0^T|f(t,X(t),Y(t),\pi(t))|dt+|g(X(T),Y(T))|\Bigl]<\infty\,.
$$
The main problem is to find $\pi^*\in\mathcal{A}$ such that
$$
\mathcal{J}(\pi^*)=\sup_{\pi\in\mathcal{A}}\mathcal{J}(\pi)\,.
$$
The control $\pi^*$ is called an optimal control if it exists.

In order to solve this stochastic optimal control problem with stochastic volatility, we use the so called maximum principle approach. The beauty of this method is that it solves a stochastic control problem in a more general situation, that is, for both Markovian and non-Markovian cases. For the Markovian case, this problem has been solved using dynamic programming approach by Mnif \cite{Mnif}. Our approach may be considered as an extension of the maximum approach in Framstad {\it et. al.} \cite{framstard} to the stochastic volatility case.

We define the Hamiltonian $\mathcal{H}:[0,T]\times \mathbb{R}\times\mathbb{R}\times\mathbb{A}\times\mathbb{R}\times\mathbb{R}\times\mathbb{R}\times\mathbb{R} \times\mathbb{R}\rightarrow\mathbb{R}$ by:

\begin{eqnarray}\label{hamiltoniangeneral}
  && \mathcal{H}(t,X(t),Y(t),\pi(t),A_1(t),A_2(t),B_1(t),B_2(t),D_1(t,\cdot))  \\ \nonumber
  &=& f(t,X(t),Y(t),\pi(t))+b(t,X(t),Y(t),\pi(t))A_1(t) +\varphi(Y(t))A_2(t) \\ \nonumber
   && + \sigma(t,X(t),Y(t),\pi(t))B_1(t) +\beta(t,X(t),Y(t),\pi(t))B_2(t)+\phi(Y(t))B_3(t) \\ \nonumber
   && +\int_{\mathbb{R}}\gamma(t,X(t),Y(t),\pi(t),z)D_1(t,z)\nu(dz)\,,
\end{eqnarray}
provided that the integral in \eqref{hamiltoniangeneral} converges.  From now on, we assume that the Hamiltonian $\mathcal{H}$ is continuously differentiable w.r.t. $x$ and $y$. Then, the adjoint equations corresponding to the admissible strategy $\pi\in\mathcal{A}$ are given by the following backward stochastic differential equations (BSDEs)
\begin{eqnarray}\nonumber
  dA_1(t) &=& -\frac{\partial\mathcal{H}}{\partial x}(t,X(t),Y(t),\pi(t),A_1(t),A_2(t),B_1(t),B_2(t),D_1(t,\cdot))dt  \\ \label{adjointg1}
   && +B_1(t)dW_1(t) +B_2(t)dW_2(t)+\int_{\mathbb{R}}D_1(t,z)\tilde{N}(dt,dz)\,, \\
  A_1(T) &=& \frac{\partial g}{\partial x}(X(T),Y(T))
\end{eqnarray}
and
\begin{eqnarray}\nonumber
  dA_2(t) &=& -\frac{\partial\mathcal{H}}{\partial y}(t,X(t),Y(t),\pi(t),A_1(t),A_2(t),B_1(t),B_2(t),D_1(t,\cdot))dt  \\ \label{adjointg2}
   && +B_3(t)dW_1(t) +B_4(t)dW_2(t)+\int_{\mathbb{R}}D_2(t,z)\tilde{N}(dt,dz)\,, \\
  A_2(T) &=& \frac{\partial g}{\partial y}(X(T),Y(T))\,.
\end{eqnarray}

The verification theorem associated to our problem is stated as follows:

\begin{theorem}\label{sufficientmp}
$\rm{ (Sufficient \ maximum \ principle)}$
Let $\pi^*\in\mathcal{A}$ with the corresponding wealth process $X^*$. Suppose that the pairs $(A_1^*(t),B_1^*(t),B_2^*(t),D_1^*(t,z))$ and\\ $(A_2^*(t),B_3^*(t),B_4^*(t),D_2^*(t,z))$ are the solutions of the adjoint equations \eqref{adjointg1} and \eqref{adjointg2}, respectively. Moreover, suppose that the following inequalities hold:
\begin{itemize}
  \item[(i)] The function $(x,y)\rightarrow g(x,y)$ is concave;
  \item[(ii)] The function $\mathcal{H}(t)=\sup_{\pi\in\mathcal{A}}\mathcal{H}(t,X(t),Y(t),\pi,A_1^*(t),A_2^*(t),B_1^*(t),B_2^*(t),D_1^*(t,z))$ is concave and
  $$
\mathcal{H}^*(t,X,Y,\pi^*,A_1^*,A_2^*,B_1^*,B_2^*,D_1^*)=\sup_{(\pi,c,p)\in\mathcal{A}} \mathcal{H}(t,X,Y,\pi,A_1^*,A_2^*,B_1^*,B_2^*,D_1^*)\,.
$$

\end{itemize}
Furthermore, we assume the following:
$$
\mathbb{E}\Bigl[\int_0^T(X^*(t))^2\Bigl((B_1^*(t))^2+(B_2^*(t))^2+\int_{\mathbb{R}}(D_1^*(t,z))^2\nu(dz)\Bigl)dt\Bigl]<\infty\,;
$$
$$
\mathbb{E}\Bigl[\int_0^T(Y(t))^2\Bigl((B_3^*(t))^2+(B_4^*(t))^2+\int_{\mathbb{R}}(D_2^*(t,z))^2\nu(dz)\Bigl)dt\Bigl]<\infty\,;
$$

\begin{eqnarray*}
\mathbb{E}\Bigl[\int_0^T\Bigl\{(A_1^*(t))^2\Bigl((\sigma(t,X(t),Y(t),\pi^*(t)))^2+(\beta(t,X(t),Y(t),\pi^*(t)))^2 && \\
 +\int_{\mathbb{R}}(\gamma(t,X(t),Y(t),\pi^*(t),z))^2\nu(dz)\Bigl)+(A_2^*(t))^2(\phi(Y(t)))^2\Bigl]dt\Bigl] &<& \infty\,.
\end{eqnarray*}

Then, $\pi^*\in\mathcal{A}$ is an optimal strategy with the corresponding optimal state process $X^*$.
\end{theorem}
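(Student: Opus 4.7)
The plan is to adapt the classical sufficient stochastic maximum principle of Framstad--\O ksendal--Sulem to the present stochastic-volatility jump-diffusion setting. The essential simplification is that the factor $Y$ is driven only by $W_2$ and is independent of the control, so $Y$ is identical along any admissible trajectory; consequently the second adjoint quadruple $(A_2^*, B_3^*, B_4^*, D_2^*)$ never enters the estimate, and hypothesis (i) only needs to produce concavity of $g$ in the first argument.

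First, fix an arbitrary admissible $\pi$ with associated state $X$ and split
\begin{equation*}
\mathcal{J}(\pi^*) - \mathcal{J}(\pi) = \mathbb{E}\int_0^T \bigl(f(t,X^*,Y,\pi^*) - f(t,X,Y,\pi)\bigr)\,dt + \mathbb{E}\bigl[g(X^*(T),Y(T)) - g(X(T),Y(T))\bigr].
\end{equation*}
The terminal term is bounded below by $\mathbb{E}[A_1^*(T)(X^*(T)-X(T))]$ via the supergradient inequality for $g$ in its first argument. I would then apply It\^o's product rule to $A_1^*(t)(X^*(t)-X(t))$, substitute the dynamics \eqref{stateprocess} and the BSDE \eqref{adjointg1}, and take expectations. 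The three integrability hypotheses stated in the theorem, combined with Cauchy--Schwarz, are precisely what is needed to make the $dW_1$, $dW_2$ and $\tilde{N}$ integrals true martingales with vanishing mean. Since $Y$ is common to both trajectories, the terms $\varphi(Y)A_2^*$ and $\phi(Y)B_3^*$ cancel inside the Hamiltonian difference, and everything rearranges into
\begin{equation*}
\mathcal{J}(\pi^*) - \mathcal{J}(\pi) \geq \mathbb{E}\Bigl[\int_0^T \Bigl(\mathcal{H}(t,X^*,Y,\pi^*,\cdot) - \mathcal{H}(t,X,Y,\pi,\cdot) - \frac{\partial \mathcal{H}^*}{\partial x}(X^*(t) - X(t))\Bigr)dt\Bigr],
\end{equation*}
where the dot stands for the starred adjoint arguments $A_1^*, A_2^*, B_1^*, B_2^*, D_1^*$.

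To conclude, I would invoke hypothesis (ii). Concavity of $\mathcal{H}$ in the variables $(x,\pi)$ together with $\pi^*$ attaining the supremum on the closed set $\mathcal{A}$ yields a supergradient bound $\mathcal{H}^* - \mathcal{H} \geq (\partial\mathcal{H}^*/\partial x)(X^* - X) + v_\pi\cdot(\pi^* - \pi)$ with $v_\pi\cdot(\pi^* - \pi) \geq 0$ for every $\pi\in\mathcal{A}$, this last inequality being the first-order characterization of a concave maximum over a closed convex set. Equivalently, one may appeal directly to the assumed concavity of the value map $\hat{\mathcal{H}}(t,x,y) = \sup_{\pi\in\mathcal{A}}\mathcal{H}(t,x,y,\pi,\cdot)$ combined with the envelope identity $\partial_x\hat{\mathcal{H}} = (\partial_x\mathcal{H})(\cdot,\pi^*)$. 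Either way the integrand is a.e.\ non-negative, giving $\mathcal{J}(\pi^*)\geq\mathcal{J}(\pi)$, and optimality of $\pi^*$ follows.

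The main obstacle I anticipate is the rigorous martingality verification after It\^o's formula: the three integrability hypotheses are tailored precisely to this step, and one must carefully bound products such as $A_1^*(\sigma^*-\sigma)$, $A_1^*(\beta^*-\beta)$, $(X^*-X)B_i^*$ and the analogous jump integrands in $L^1([0,T]\times\Omega)$ via Cauchy--Schwarz before passing the expectation through. A secondary subtlety is making the supergradient step precise when $\pi^*$ lies on the boundary of $\mathcal{A}$, which is exactly why (ii) is stated in terms of the supremum together with concavity rather than as an interior critical-point condition.
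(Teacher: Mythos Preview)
Your approach is correct and follows the same Framstad--\O ksendal--Sulem template as the paper: split $\mathcal{J}(\pi^*)-\mathcal{J}(\pi)$ into a running part and a terminal part, bound the terminal part below via concavity of $g$ and the adjoint terminal condition, apply the It\^o product rule, use the integrability hypotheses to kill the local martingale terms, and finish with concavity of $\mathcal{H}$.

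The one noteworthy difference is your observation that the factor process $Y$ in \eqref{externalgeneral} is uncontrolled, hence identical along every admissible trajectory. You exploit this to drop the second adjoint quadruple $(A_2^*,B_3^*,B_4^*,D_2^*)$ entirely and to use only concavity of $g$ in its first argument. The paper, by contrast, carries the terms $(Y^*(t)-Y(t))A_2^*(T)$ and $(Y^*(t)-Y(t))\partial_y\mathcal{H}^*$ through the whole calculation as if $Y^*$ and $Y$ could differ, and invokes joint concavity of $g$ and $\mathcal{H}$ in $(x,y)$. Since $Y^*-Y\equiv 0$ here, those extra terms vanish anyway, so your route is strictly more economical and shows that the second integrability hypothesis and the $A_2^*\phi$-term in the third are not actually needed for the conclusion. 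The paper's version would become genuinely necessary only if the factor dynamics were allowed to depend on the control.
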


\begin{proof}
See Appendix.

\end{proof}

Note that the sufficient maximum principle presented in Theorem \ref{sufficientmp} is based on the concavity of the Hamiltonian, however, this condition does not hold in many concrete situations. Below, we relax this condition and state the necessary maximum principle for our control problem. Thus, we further consider the following assumptions.
\begin{itemize}
  \item For all $s\in[0,T]$ and all bounded $\mathcal{F}_s$-measurable random variable $\alpha(\omega)$, the control $\xi(t):=\mathbf{1}_{[s,T]}(t)\alpha(\omega)$ belongs to the admissible strategy $\mathcal{A}$.
  \item For all $\pi,\zeta\in\mathcal{A}$, with $\zeta$ bounded, there exists $\epsilon>0$ such that the control $\pi(t)+\ell\zeta(t)\in\mathcal{A}$, for all $\ell\in(-\epsilon;\epsilon)$.
  \item We define the derivative processes
  $$
  x_1(t):=\frac{d}{d\ell}X^{\pi+\ell\zeta}(t)\Bigl|_{\ell=0} \ \ \ \ \rm{and} \ \ \ \ y_1(t):=\frac{d}{d\ell}Y^{\pi+\ell\zeta}(t)\Bigl|_{\ell=0}\,.
  $$
  Then, for all $\pi,\zeta\in\mathcal{A}$, with $\zeta$ bounded, the above derivatives exist and belong to $L^2([0,T]\times\Omega)$, and \eqref{stateprocess} and \eqref{externalgeneral},
  \begin{eqnarray*}
    dx_1(t) &=& x_1(t)\Bigl[\frac{\partial b}{\partial x}(t)dt +\frac{\partial \sigma}{\partial x}(t)dW_1(t)+ \frac{\partial \beta}{\partial x}(t)dW_2(t) +\int_{\mathbb{R}}\frac{\partial \gamma}{\partial x}(t,z)\tilde{N}(dt,dz)\Bigl] \\
     && +y_1(t)\Bigl[\frac{\partial b}{\partial y}(t)dt +\frac{\partial \sigma}{\partial y}(t)dW_1(t)+ \frac{\partial \beta}{\partial y}(t)dW_2(t) +\int_{\mathbb{R}}\frac{\partial \gamma}{\partial y}(t,z)\tilde{N}(dt,dz)\Bigl] \\
     && +\zeta(t)\Bigl[\frac{\partial b}{\partial \pi}(t)dt +\frac{\partial \sigma}{\partial \pi}(t)dW_1(t)+ \frac{\partial \beta}{\partial \pi}(t)dW_2(t) +\int_{\mathbb{R}}\frac{\partial \gamma}{\partial \pi}(t,z)\tilde{N}(dt,dz)\Bigl]\,,
  \end{eqnarray*}
  where we have used the notation $\frac{\partial b}{\partial x}(t)=\frac{\partial b}{\partial x}(t,X(t),Y(t),\pi(t))$, \\ $\frac{\partial \sigma}{\partial x}(t)=\frac{\partial \sigma}{\partial x}(t,X(t),Y(t),\pi(t))$, etc. Moreover,
  \begin{equation*}
    dy_1(t)=y_1(t)[\varphi'(Y(t))dt+\phi'(Y(t))dW_2(t)]\,.
  \end{equation*}
\end{itemize}

\begin{theorem}\label{necessarymp}
$\rm{ (Necessary \ maximum \ principle)}$ Let $\pi\in\mathcal{A}$ with corresponding solutions $X(t)$, $(A_1(t),B_1(t),B_2(t),D_1(t,\cdot))$, $(A_2(t),B_3(t),B_4(t),D_2(t.\cdot))$ of \eqref{stateprocess}, \eqref{adjointg1} and \eqref{adjointg2} respectively, and the derivative processes $x_1(t)$ and $y_1(t)$ given above. Moreover, assume the following integrability conditions:
\begin{eqnarray*}
  \mathbb{E}\Bigl\{\int_0^TA_1^2(t)\Bigl[x_1^2(t)\Bigl(\Bigl(\frac{\partial \sigma}{\partial x}(t)\Bigl)^2 +\Bigl(\frac{\partial \beta}{\partial x}(t)\Bigl)^2 +\int_{\mathbb{R}}\Bigl(\frac{\partial \gamma}{\partial x}(t,z)\Bigl)^2\nu(dz)\Bigl) &&  \\
    y_1^2(t)\Bigl(\Bigl(\frac{\partial \sigma}{\partial y}(t)\Bigl)^2 +\Bigl(\frac{\partial \beta}{\partial y}(t)\Bigl)^2 +\int_{\mathbb{R}}\Bigl(\frac{\partial \gamma}{\partial y}(t,z)\Bigl)^2\nu(dz)\Bigl) && \\
    \zeta^2(t)\Bigl(\Bigl(\frac{\partial \sigma}{\partial \pi}(t)\Bigl)^2 +\Bigl(\frac{\partial \beta}{\partial \pi}(t)\Bigl)^2 +\int_{\mathbb{R}}\Bigl(\frac{\partial \gamma}{\partial \pi}(t,z)\Bigl)^2\nu(dz)\Bigl)\Bigl]dt && \\
     +\int_0^TA_2^2(t)y_1^2(t)(\phi'(Y(t)))^2dt\Bigl\} && \\
   &<& \infty
\end{eqnarray*}
and
\begin{eqnarray*}
  \mathbb{E}\Bigl\{\int_0^Tx_1^2(t)\Bigl[B_1^2(t)+B_2^2(t)+\int_{\mathbb{R}}D_1^2(t,z)\nu(dz)\Bigl]dt &&  \\
 \int_0^Ty_1^2(t)\Bigl[B_3^2(t)+B_4^2(t)+\int_{\mathbb{R}}D_2^2(t,z)\nu(dz)\Bigl]dt\Bigl\}  &<& \infty\,.
\end{eqnarray*}

Then following are equivalent
\begin{enumerate}
  \item $\frac{d}{d\ell}\mathcal{J}(\pi+\ell\zeta)\Bigl|_{\ell=0}=0$ for all bounded $\zeta\in\mathcal{A}$;
  \item $\frac{d\mathcal{H}}{d\pi}(t,X^*(t),Y(t),\pi^*(t),A_1^*(t),A_2^*(t),B_1^*(t),B_2^*(t),D_1^*(t,z))=0$, for all $t\in[0,T]$.
\end{enumerate}
\end{theorem}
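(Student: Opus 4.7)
The plan is to reduce the G\^{a}teaux derivative $\frac{d}{d\ell}\mathcal{J}(\pi+\ell\zeta)|_{\ell=0}$ to the single expression $\mathbb{E}\big[\int_0^T\zeta(t)\frac{\partial\mathcal{H}}{\partial\pi}(t)\,dt\big]$ through a duality identity obtained by applying It\^{o}'s formula to $A_1(t)x_1(t)+A_2(t)y_1(t)$ and exploiting the adjoint BSDEs \eqref{adjointg1}--\eqref{adjointg2}. The second admissibility hypothesis makes $\pi+\ell\zeta$ admissible for $\ell$ close to $0$, so the derivative in (1) is well-defined, and once the duality identity is in place both implications follow rapidly.

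First I would differentiate \eqref{functionalgeneral} inside the expectation, justifying the interchange of $\frac{d}{d\ell}$, the time integral and $\mathbb{E}$ by the $L^2$ bounds on $x_1,y_1$. This produces
$$
\frac{d}{d\ell}\mathcal{J}(\pi+\ell\zeta)\Big|_{\ell=0}=\mathbb{E}\Big[\int_0^T\Big(\tfrac{\partial f}{\partial x}x_1+\tfrac{\partial f}{\partial y}y_1+\tfrac{\partial f}{\partial \pi}\zeta\Big)dt+\tfrac{\partial g}{\partial x}(X(T),Y(T))x_1(T)+\tfrac{\partial g}{\partial y}(X(T),Y(T))y_1(T)\Big].
$$
Next I would apply It\^{o}'s formula to $A_1(t)x_1(t)+A_2(t)y_1(t)$ on $[0,T]$, substituting the SDEs for $x_1,y_1$ stated just before the theorem and the BSDEs \eqref{adjointg1}, \eqref{adjointg2}. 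The terminal conditions $A_1(T)=\partial_x g$, $A_2(T)=\partial_y g$ convert the terminal $g$-contributions above into $\mathbb{E}[A_1(T)x_1(T)+A_2(T)y_1(T)]$, while the two displayed integrability hypotheses ensure that every Brownian and compensated-Poisson integral appearing in the It\^{o} expansion is a true martingale and thus drops after taking expectations. After grouping, the drift terms arising from $x_1$ and $y_1$ cancel against the $-\partial_x\mathcal{H}$ and $-\partial_y\mathcal{H}$ drifts of $A_1,A_2$, together with the Brownian covariations pairing the $B_i$ with $\partial_x\sigma,\partial_y\sigma,\partial_x\beta,\partial_y\beta,\phi'$ and the predictable jump brackets pairing the $D_j$ with $\partial_x\gamma,\partial_y\gamma$, leaving only the $\zeta$-dependent residue. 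Reading that residue through definition \eqref{hamiltoniangeneral} of $\mathcal{H}$ gives the duality identity
$$
\frac{d}{d\ell}\mathcal{J}(\pi+\ell\zeta)\Big|_{\ell=0}=\mathbb{E}\Big[\int_0^T\zeta(t)\frac{\partial\mathcal{H}}{\partial\pi}(t,X(t),Y(t),\pi(t),A_1(t),A_2(t),B_1(t),B_2(t),D_1(t,\cdot))\,dt\Big].
$$

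The implication (2)$\Rightarrow$(1) is now immediate. For (1)$\Rightarrow$(2), I would use the first admissibility hypothesis: for fixed $s\in[0,T)$ and any bounded $\mathcal{F}_s$-measurable $\alpha$, the perturbation $\zeta(t)=\mathbf{1}_{[s,T]}(t)\alpha$ lies in $\mathcal{A}$, whence (1) together with the duality identity forces $\mathbb{E}\big[\alpha\int_s^T\frac{\partial\mathcal{H}}{\partial\pi}(t)\,dt\big]=0$. Differentiating in $s$ at a Lebesgue point and varying $\alpha$ over a generating family of bounded $\mathcal{F}_s$-measurable random variables yields $\frac{\partial\mathcal{H}}{\partial\pi}(s)=0$ for $ds\otimes d\mathbb{P}$-a.e.\ $(s,\omega)$, which is (2). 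The main obstacle is the It\^{o} step itself: one must carefully book-keep the Brownian quadratic covariations and the predictable jump brackets between $(x_1,y_1)$ and $(A_1,A_2)$ and verify that they recombine into the exact coefficient structure of $\partial_\pi\mathcal{H}$ in \eqref{hamiltoniangeneral}; the two integrability assumptions in the statement are calibrated precisely so that the residual stochastic integrals are genuine martingales (not merely local martingales), allowing the clean vanishing of unwanted terms upon taking expectations.
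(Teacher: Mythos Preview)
Your proposal is correct and follows essentially the same route as the paper's own proof: differentiate $\mathcal{J}$ in $\ell$, apply It\^{o}'s formula to $A_1(t)x_1(t)+A_2(t)y_1(t)$ using the adjoint BSDEs and the derivative-process dynamics so that the $\partial_x\mathcal{H}$ and $\partial_y\mathcal{H}$ contributions cancel, and arrive at the duality identity $\frac{d}{d\ell}\mathcal{J}(\pi+\ell\zeta)|_{\ell=0}=\mathbb{E}\bigl[\int_0^T\zeta(t)\,\partial_\pi\mathcal{H}(t)\,dt\bigr]$; then conclude $(1)\Leftrightarrow(2)$ by specialising to the indicator perturbations $\zeta(t)=\mathbf{1}_{[s,T]}(t)\alpha$. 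Your write-up is in fact more careful than the paper's appendix in justifying the interchange of $\frac{d}{d\ell}$ with $\mathbb{E}$ and in spelling out the Lebesgue-point argument for $(1)\Rightarrow(2)$, but the underlying strategy is identical.
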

\begin{proof}
See Appendix.
\end{proof}

\section{Application to optimal investment-consumption and life insurance selection problem}

We consider a financial market consisting of one risk-free asset $(B(t))_{0\leq t\leq T}$ and one risky  asset $(S(t))_{0\leq t\leq T}$. Their respective prices are given by the following  SDE:

\begin{eqnarray}
  dB(t) &=& r(t)B(t)dt\,, \ \ B(0)=1\,, \label{risk-free} \\
  dS(t) &=& S(t)\Bigl[\alpha(t,Y(t))dt+\beta(t,Y(t))dW_1(t)+\sigma(t,Y(t))dW_2(t) \\ \nonumber
  && \ \ \ \ \  \ \  +\int_{\mathbb{R}}\gamma(t,Y(t),z)\widetilde{N}(dt,dz)\Bigl]\,, \ \ \ S(0)=s>0\,, \label{risky}
\end{eqnarray}
where $Y$ is a continuous time economic external factor governed by
\begin{equation}\label{external}
    dY(t)=g(Y(t))dt+dW_1(t)\,.
\end{equation}

Here, the associated parameters in the model satisfy the following assumptions:
\begin{itemize}
  \item[$(\textbf{A1})$] The interest rate $r(t)$ is positive, deterministic and integrable for all $t\in[0,T]$. The mean rate of return $\alpha$, the volatilities $\beta\,,\sigma$ and the dispersion rate $\gamma>-1$, are $\mathbb{R}$-valued functions are assumed to be continuously differentiable functions $(\in\mathcal{C}^1)$ and bounded.  Note that, by the continuity of $Y$, the process $S$ in \eqref{risky} is well defined on $[0,T]$. We also assume the following integrability condition:
      $$
      \mathbb{E}\left[\int_0^T(\beta^2(t,y)+ \sigma^2(t,y) +\int_{\mathbb{R}\setminus\{0\}}|\gamma(t,y,z)|^2\nu(dz))dt\right]<\infty\,.
      $$

\end{itemize}
Suppose that $g\in\mathcal{C}^1(\mathbb{R})$ with the first derivative bounded, i.e., $|g'(y)|\leq K$ and satisfy  a Lipschitz condition on the $\mathbb{R}$-valued function $g$:

\begin{itemize}
  \item[$(\textbf{A2})$] There exists a positive constant $C$ such that:
  $$ |g(y)-g(w)|\leq C|y-w|\,, \ \ \ \ \ \  y,w\in\mathbb{R}\,.$$
\end{itemize}

Consider a wage earner whose life time is a nonnegative random variable $\tau$ defined on the probability space $(\Omega,\mathcal{F},\{\mathcal{F}_t\}_{0\leq t\leq T},\mathbb{P})$. 
 The distribution function $F(t)$ of the random lifetime $\tau$ with the probability density function $f(t)$ is given by
$$
F(t):=\mathbb{P}(\tau<t)=\int_0^tf(s)ds\,.
$$
Thus, the probability that the lifetime $\tau>t$ is given by:
$$
\bar{F}(t):=\mathbb{P}(\tau\geq t\mid\mathcal{F}_t)=1-F(t)\,.
$$
We introduce the instantaneous force of mortality $\lambda(t)$ for the wage earner to be alive at time $t$. By definition, $\lambda(t)$ is given by:
\begin{eqnarray*}
  \lambda(t) &:=& \lim_{\Delta t\rightarrow0}\frac{\mathbb{P}(t\leq\tau<t+\Delta t|\tau\geq t)}{\Delta t} \\
   &=& \lim_{\Delta t\rightarrow0}\frac{\mathbb{P}(t\leq\tau<t+\Delta t)}{\Delta t\mathbb{P}(\tau\geq t)} \\
   &=&  \frac{1}{\bar{F}(t)}\lim_{\Delta t\rightarrow0}\frac{F(t+\Delta t)-F(t)}{\Delta t} \\
   &=& \frac{f(t)}{\bar{F}(t)}=-\frac{d}{dt}(\ln(\bar{F}(t)))\,.
\end{eqnarray*}
Then, the conditional survival probability of the wage earner is given by:
 \begin{equation}\label{survival}
\bar{F}(t)=\mathbb{P}(\tau>t|\mathcal{F}_t)=\exp\left(-\int_0^t\lambda(s)ds\right),
 \end{equation}
and the conditional survival probability density of the death of the wage earner by:
 \begin{equation}\label{death}
    f(t):=\lambda(t)\exp\left(-\int_0^t\lambda(s)ds\right)\,.
 \end{equation}

As in Mousa {\it et al.} \cite{mousa}, we suppose the existence of an insurance market composed of $M$ insurance companies, with each insurance company continuously offering life insurance contracts. We assume that the wage earner is paying premium insurance rate $p_n(t)$, at time $t$ for each company $n=1,2,\ldots,M$. If the wage earner dies, the insurance companies will pay $p_n(\tau)/\eta_n(\tau)$ to his/her beneficiary. Here, $\eta_n>0$ is the $n$th deterministic insurance company premium-payout ratio. Additionally, we assume that the $M$ insurance companies under consideration offer pairwise distinct contracts in the sense that $\eta_{n_1}(t)\neq\eta_{n_2}(t)$, for every $n_1\neq n_2$, a.e. When he/she dies, the total legacy is given by:
\begin{equation}\label{legacy}
    \mathcal{J}_n(\tau):=X(\tau)+\sum_{n=1}^M\frac{p_n(\tau)}{\eta_n(\tau)}\,,
\end{equation}
where $X(\tau)$ is the wealth process of the wage earner at time $\tau\in[0,T]$.

Let $c(t)$ denote the consumption rate of the wage earner and $\pi(t)$ the fraction of the wage earner's wealth invested in the risky share at time $t$, satisfying the following integrability condition.

\begin{equation}\label{integrability}
\int_0^T[c(t)+\pi^2(t)]dt<\infty, \ \ \ \rm{a.s.}
\end{equation}

Moreover, we assume that the
shares are divisible, continuously traded and there is no transaction
costs, taxes or short-selling constraints in the trading. Then the wealth process
$X(t)$ is defined by the following (SDE):

\begin{eqnarray}\label{wealth}
   dX(t) &=& \left[X(t)(r(t)+\pi(t)\mu(t,Y(t)))-c(t)-\sum_{n=1}^Mp_n(t)\right]dt \\ \nonumber
&&  \ \ \ +\pi(t)\beta(t,Y(t))X(t)dW_1(t) +\pi(t)X(t)\sigma(t,Y(t))dW_2(t) \\ \nonumber
 && \ \ \  +\pi(t)X(t)\int_{\mathbb{R}}\gamma(t,Y(t),z)\widetilde{N}(dt,dz)\,, \ \ \
t\in(0,\tau\wedge T]\,, \\ \nonumber
   X(0) &=& x>0\,,
 \end{eqnarray}
where $\mu(t,Y(t)):=\alpha(t,Y(t))-r(t)$ is the appreciation rate and $\tau\wedge
T:=\min\{\tau, T\}$. We assume that $\mu(t,Y(t))>0$, i.e., the expected return of the risk share is higher than the interest rate.

Let $\rho(t)>0$ be deterministic process denoting the discount rate process. We define the utility functions $U_i:[0,T]\times \mathbb{R}_+\rightarrow\mathbb{R}_+\,, \ \ i=1,2,3$ as the concave, non-decreasing, continuous and differentiable functions with respect to the second variable, and the strictly decreasing continuous inverse functions $ I_i:[0,T]\times\mathbb{R}_+\rightarrow\mathbb{R}_+\,, \ \ i=1,2,3\,,$ by
\begin{equation}\label{inverse}
I_i(t,x)=\left(\frac{\partial U_i(t,x)}{\partial x}\right)^{-1}\,, \ \ i=1,2,3\,.
\end{equation}

Let $p(t):=(p_1(t),\ldots,p_M(t))$ be the vector of the insurance rates paid at the insurance companies.
The wage earner faces the problem of choosing the optimal strategy $\mathcal{A}:=\{(\pi,c,p):=(\pi(t),c(t),p(t))_{t\in[0,T]}$\} which maximizes the discounted expected utilities from the consumption during his/her lifetime $[0,\tau\wedge T]$, from the wealth if he/she is alive until the terminal time $T$ and from the legacy if he/she dies before time $T$. This problem can be defined by the following performance functional (for more details see, e.g., Pliska and Ye \cite{Pliska}, Oksendal and Sulem \cite{Oksendal}, Azevedo {et. al.} \cite{azevedo}, Guambe and Kufakunesu \cite{guambe}).

\begin{eqnarray}\nonumber
   && J(0,x,\pi,c,p) \\ \nonumber
  \quad &:=&
\sup_{(\pi,c,p)\in\mathcal{A}}\mathbb{E}\Bigl[\int_0^{\tau\wedge T}
e^{-\int_0^s\rho(u)du}U_1(s,c(s))ds  \\ \label{maximumutility} &&
+e^{-\int_0^{\tau}\rho(u)du}U_2(\tau,\mathcal{J}_n(\tau))\mathbf{1}_{\{\tau\leq T\}}
+e^{-\int_0^T\rho(u)du}U_3(X(T))\mathbf{1}_{\{\tau>T\}}\Bigl],
\end{eqnarray}
where $\mathbf{1}_A$ is a characteristic function of the set $A$.

The set of strategies $\mathcal{A}:=\{(\pi,c,p):=(\pi(t),c(t),p(t))_{t\in[0,T]}$\} is said to be admissible if, in addition to the integrability condition \eqref{integrability}, the SDE \eqref{wealth} has a unique strong solution such that $X(t)\geq0$, $\mathbb{P}$-a.s. and

\begin{eqnarray*}
   &&
\mathbb{E}\Bigl[\int_0^{\tau\wedge T}
e^{-\int_0^s\rho(u)du}U_1(s,c(s))ds +e^{-\int_0^{\tau}\rho(u)du}U_2(\tau,\mathcal{J}(\tau))\mathbf{1}_{\{\tau\leq T\}}  \\ \nonumber && \ \ \ \ \ \ \ \
+e^{-\int_0^T\rho(u)du}U_3(X(T))\mathbf{1}_{\{\tau>T\}}\Bigl]<\infty\,.
\end{eqnarray*}

Note that from the conditional survival probability of the wage earner \eqref{survival} and the conditional survival probability density of death of the wage earner \eqref{death}, we can write the dynamic version of the functional \eqref{maximumutility} by:
\begin{eqnarray}\nonumber
  J(t,x,\pi,c,p) &=&  \mathbb{E}_{t,x}\Bigl[\int_t^T
e^{-\int_t^s(\rho(u)+\lambda(u))du}[U_1(s,c(s)) +\lambda(s)U_2(s,\mathcal{J}(s))]ds \\ \label{functional0}
   && \ \ \ \ \ \  +e^{-\int_t^T(\rho(u)+\lambda(u))du}U_3(X(T))\Bigl].
\end{eqnarray}
Thus, the problem of the wage earner is to maximize the above dynamic performance functional under the admissible strategy $\mathcal{A}$. Therefore, the value function  $V(t,x,y)$ can be restated in the following form:
\begin{equation}\label{valuefunction}
V(t,x,y)=\sup_{(\pi,c,p)\in\mathcal{A}}J(t,x,\pi,c,p)\,.
\end{equation}

 Applying the results in the previous section to solve the above problem, we define the Hamiltonian $\mathcal{H}:[0,T]\times \mathbb{R}\times\mathbb{R}\times\mathbb{R}\times(0,1)\times\mathbb{R}^M\times\mathbb{R}\times\mathbb{R}\times\mathbb{R} \times\mathbb{R}\times\mathbb{R}\rightarrow\mathbb{R}$ by:

\begin{eqnarray}\label{hamiltonian}
  && \mathcal{H}(t,X(t),Y(t),c(t),\pi(t),p(t),A_1(t),A_2(t),B_1(t),B_2(t), B_3(t),D_1(t))  \\ \nonumber
  &=& e^{-\int_0^t(\rho(s)+\lambda(s))ds}[U_1(t,c(t)) +\lambda(t)U_2(t,\mathcal{J}(t))] \\ \nonumber
  && +\left[X(t)(r(t)+\pi(t)\mu(t,Y(t)))-c(t)-\sum_{n=1}^Mp_n(t)\right]A_1(t) +g(Y(t))A_2(t)  \\ \nonumber
   && +\pi(t)X(t)(\beta(t,Y(t))B_1(t)+\sigma(t,Y(t))B_2(t))+B_3(t) \\ \nonumber
   && +\pi(t)X(t)\int_{\mathbb{R}}\gamma(t,Y(t),z)D_1(t,z)\nu(dz)\,.
\end{eqnarray}

The adjoint equations corresponding to the admissible strategy $(\pi,c,p)$ are given by the following BSDEs

\begin{eqnarray}\nonumber
  dA_1(t) &=& -\frac{\partial \mathcal{H}}{\partial x}(t,X(t),Y(t),c(t),\pi(t),p(t),A_1(t),A_2(t),B_1(t),B_2(t), B_3(t),D_1(t))dt \\ \label{adjoint1}
   && +B_1(t)dW_1(t) +B_2(t)dW_2(t)+  \int_{\mathbb{R}}D_1(t,z)\tilde{N}(dt,dz)\,; \\ \nonumber
   A_1(T) &=& e^{-\int_0^T(\rho(s)+\lambda(s))ds}U_3'(X(T))\,,
\end{eqnarray}
where $U':=U_x$ and
\begin{eqnarray}\nonumber
  dA_2(t) &=& -\frac{\partial \mathcal{H}}{\partial y}(t,X(t),Y(t),c(t),\pi(t),p(t),A_1(t),A_2(t),B_1(t),B_2(t), B_3(t),D_1(t))dt \\ \label{adjoint2}
  && +B_1(t)dW_3(t) +B_4(t)dW_2(t)+  \int_{\mathbb{R}}D_2(t,z)\tilde{N}(dt,dz)\,; \\ \nonumber
  A_2(T) &=& 0\,.
\end{eqnarray}

To solve our optimization problem, we consider the power utility functions of the constant relative risk aversion type defined as follows
 $U_i(t,x)=U_i(x)=\kappa_i\frac{x^{\delta}}{\delta}\,, \ \ \ i=1,2,3$, where $\delta\in(-\infty,1)\setminus\{0\}$ and $\kappa_i>0$ are constants. Thus, the inverse function \eqref{inverse} is given by $I_i(t,x)=I_i(x)=\left(\frac{x}{\kappa_i}\right)^{-\frac{1}{1-\delta}}$\,.\\

The following theorem gives the characterization of the optimal strategy.
\begin{theorem}\label{theoremmulti}
Suppose that the assumptions $\bf{(A1)}-\bf{(A2)}$ and the integrability condition \eqref{integrability} hold. Then the optimal strategy $(c^*, p^*,\pi^*)\in\mathcal{A}$ for the problem \eqref{valuefunction} is given by:
\begin{itemize}
  \item[$(i)$] the optimal consumption process is given by
\begin{eqnarray}\nonumber
c^*(t,x,y) &=& I_1\left(t,\frac{A_1^*(t)}{\kappa_1}(t)e^{\int_0^t(\rho(s)+\lambda(s))ds}\right) \\ \label{consumption}
&=& \left(\frac{A_1^*(t)}{\kappa_1}\right)^{\frac{1}{\delta-1}}e^{\frac{1}{\delta-1}\int_0^t(\rho(s)+\lambda(s))ds}\,;
\end{eqnarray}

  \item[$(ii)$] for each $n\in\{1,2,\ldots,M\}$, the optimal premium insurance $p_n(t,x,y)$ is given by
  \begin{eqnarray}\nonumber
  p_n^*(t,x,y) &=& \left\{
                 \begin{array}{ll}
                   \max\left\{0,\,\left[I_2\left(t,\frac{\eta_n(t)}{\kappa_2\lambda(t)}A_1^*(t)e^{\int_0^t(\rho(s)+\lambda(s))ds}\right)-x \right] \right\}, & \hbox{if} \ \  n=n^*(t) \\
                   0, & \hbox{otherwise}\,,
                 \end{array}
               \right. \\ \label{insurance}
   &=& \left\{
                 \begin{array}{ll}
                   \max\left\{0,\,\eta_n(t)\left[\left(\frac{\eta_n(t)A_1^*(t)}{\kappa_2\lambda(t)}\right)^{\frac{1}{\delta-1}} e^{\frac{1}{\delta-1}\int_0^t(\rho(s)+\lambda(s))ds}-x \right] \right\}, & \hbox{if} \ \  n=n^*(t) \\
                   0, & \hbox{otherwise}\,,
                 \end{array}
               \right.
  \end{eqnarray}
where $n^*(t)=\arg\min_{n\in\{1,2,\ldots,M\}}\{\eta_n(t)\}$
\item[$(iii)$] and, the optimal allocation $\pi^*(t,x,y)\in(0,1)$ is the solution of the following equation
  \begin{eqnarray*}
\beta(t,y)h_y(t,y)-\Bigl\{\mu(t,y) -(1-\delta)\left(\beta^2(t,y)+\sigma^2(t,y)\right)\pi && \\
- \int_{\mathbb{R}}\left[1-(1+\pi\gamma(t,y,z))^{\delta-1}\right]\gamma(t,y,z)\nu(dz)\Bigl\}h(t,y) &=& 0\,,
\end{eqnarray*}
where $h\in \mathcal{C}^{1,2}([0,T]\times\mathbb{R})$.
  \end{itemize}

\end{theorem}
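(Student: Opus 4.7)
The plan is to apply the sufficient maximum principle (Theorem \ref{sufficientmp}) to the Hamiltonian \eqref{hamiltonian}. First, I verify its hypotheses: the terminal payoff $g(x,y)=e^{-\int_0^T(\rho(u)+\lambda(u))du}U_3(x)$ is concave in $(x,y)$ (trivially in $y$, and in $x$ by concavity of $U_3$), and for fixed adjoint processes the map $(c,\pi,p)\mapsto\mathcal{H}$ is jointly concave. Indeed, $U_1(c)$ is concave, $U_2(\mathcal{J}(t))=U_2(X(t)+\sum_{n}p_n(t)/\eta_n(t))$ is concave in $p$ by concavity of $U_2$ and linearity of $\mathcal{J}$ in $p$, while every other term in \eqref{hamiltonian} is linear in $(c,\pi,p)$. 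Hence the pointwise maximization reduces to first-order conditions, the integrability hypotheses of Theorem \ref{sufficientmp} being verified a posteriori on the explicit candidate.

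Items (i) and (ii) then follow from direct computation. The consumption FOC reads $\partial_c\mathcal{H}=e^{-\int_0^t(\rho+\lambda)du}U_1'(c^*)-A_1^*=0$, which inverts via $U_1'(x)=\kappa_1 x^{\delta-1}$ to give \eqref{consumption}. For the insurance rates,
\begin{equation*}
\frac{\partial\mathcal{H}}{\partial p_n}=\frac{\lambda(t)}{\eta_n(t)}e^{-\int_0^t(\rho+\lambda)du}U_2'(\mathcal{J}(t))-A_1^*(t).
\end{equation*}
Since $U_2'(\mathcal{J})>0$ and $\mathcal{J}$ does not depend on $n$, the marginal benefit $\lambda U_2'(\mathcal{J})/\eta_n$ is strictly largest at the index with smallest $\eta_n$; by the pairwise distinctness assumption this index is unique, so the Kuhn--Tucker conditions force $p_n^*\equiv 0$ for $n\neq n^*(t):=\arg\min_n\eta_n(t)$, and the interior FOC at $n^*$ gives $U_2'(\mathcal{J}^*)=\eta_{n^*}A_1^*e^{\int_0^t(\rho+\lambda)du}/\lambda$. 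Inverting yields $\mathcal{J}^*=I_2\bigl(\eta_{n^*}A_1^*e^{\int_0^t(\rho+\lambda)du}/(\kappa_2\lambda)\bigr)$, so that $p_{n^*}^*=\eta_{n^*}(\mathcal{J}^*-X)$, with the positive-part truncation in \eqref{insurance} enforcing the non-negativity constraint on $p_{n^*}$.

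For item (iii), the portfolio FOC $\partial_\pi\mathcal{H}=0$ gives, after dividing by $X^*(t)>0$,
\begin{equation*}
\mu(t,Y)A_1^*+\beta(t,Y)B_1^*+\sigma(t,Y)B_2^*+\int_{\mathbb{R}}\gamma(t,Y,z)D_1^*(t,z)\,\nu(dz)=0.
\end{equation*}
To turn this coupled algebraic/BSDE relation into the stated equation for $(h,\pi^*)$, I make the multiplicative ansatz
\begin{equation*}
A_1^*(t)=e^{-\int_0^t(\rho(u)+\lambda(u))du}(X^*(t))^{\delta-1}h(t,Y(t)),\qquad h(T,y)=\kappa_3,
\end{equation*}
which matches the terminal condition $A_1^*(T)=e^{-\int_0^T(\rho+\lambda)du}U_3'(X^*(T))$. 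Applying the It\^o--L\'evy formula to $A_1^*$, using the dynamics \eqref{wealth} of $X^*$ and \eqref{external} of $Y$, and matching coefficients with the adjoint BSDE \eqref{adjoint1}, I read off
\begin{align*}
B_1^*&=e^{-\int_0^t(\rho+\lambda)du}(X^*)^{\delta-1}\bigl[(\delta-1)\pi^*\beta\,h+h_y\bigr],\\
B_2^*&=e^{-\int_0^t(\rho+\lambda)du}(X^*)^{\delta-1}(\delta-1)\pi^*\sigma\,h,\\
D_1^*(t,z)&=e^{-\int_0^t(\rho+\lambda)du}(X^*)^{\delta-1}h\bigl[(1+\pi^*\gamma(t,Y,z))^{\delta-1}-1\bigr].
\end{align*}
Substituting into the portfolio FOC and cancelling the common positive factor $e^{-\int_0^t(\rho+\lambda)du}(X^*)^{\delta-1}$ then produces the equation for $(h,\pi^*)$ asserted in (iii).

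The main obstacle I anticipate is the validation step rather than the derivation itself. One must check that the candidate triple $(c^*,p^*,\pi^*)$ fulfils the second-moment integrability assumptions of Theorem \ref{sufficientmp}, which requires quantitative bounds on $h$ and $h_y$. These bounds are coupled to the solvability of a semi-linear integro-PDE obtained by substituting the algebraic relation $\pi^*=\pi^*(h)$ into the drift part of \eqref{adjoint1}; existence of a sufficiently regular $h\in\mathcal{C}^{1,2}([0,T]\times\mathbb{R})$ relies on the boundedness hypotheses in $(\textbf{A1})$ and on the Lipschitz regularity of $g$ in $(\textbf{A2})$. In addition, one must verify $X^*(t)\ge 0$ almost surely, which motivates the positive-part truncation in (ii) and exploits the jump condition $\gamma>-1$ from $(\textbf{A1})$.
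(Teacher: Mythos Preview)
Your strategy coincides with the paper's: apply the sufficient maximum principle, take first-order (resp.\ Kuhn--Tucker) conditions in $c$ and $p$, write the linear $\pi$-relation \eqref{investmentr}, postulate a separable ansatz for $A_1^*$, expand by the It\^o--L\'evy formula, read off $B_1^*,B_2^*,D_1^*$, and substitute back. The one substantive discrepancy is the ansatz itself. The paper sets $A_1^*(t)=X(t)^{\delta-1}e^{-h(t,Y(t))}$ with $h(T,\cdot)=\int_0^T(\rho+\lambda)\,du$, whereas you take the multiplicative form $A_1^*(t)=e^{-\int_0^t(\rho+\lambda)\,du}(X^*)^{\delta-1}h(t,Y(t))$ with $h(T,\cdot)=\kappa_3$. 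With your choice the $dW_1$-coefficient contributes $+h_y$ rather than $-h_y$, so after substitution into the $\pi$-FOC you obtain $\beta h_y+\{\,\cdots\,\}h=0$, not the equation $\beta h_y-\{\,\cdots\,\}h=0$ stated in (iii); the two are equivalent under the reparametrization $h_{\text{yours}}\propto e^{-h_{\text{paper}}}$, but your claim that the substitution ``produces the equation asserted in (iii)'' is not literally true without that change of variable. Apart from this, the paper also records the semilinear PDE for $h$ arising from the drift match and invokes a fixed-point result (Berdjane--Pergamenshchikov) for its $\mathcal{C}^{1,2}$ solvability, and it checks the sign of $\partial_\pi^2\mathcal{H}$ at the candidate to confirm a maximum in $\pi$; you gesture at both points in your final paragraph but do not carry them out.
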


\begin{proof}
From the Hamiltonian function \eqref{hamiltonian} and the definition of the utility functions $U_1,\,U_2$, we can deduce the following conditions:
\begin{eqnarray*}
\mathcal{H}_{cc} &=& e^{-\int_0^t(\rho(s)+\lambda(s))ds}\frac{\partial^2U_1}{\partial c^2}(t,c)<0\,, \\
\mathcal{H}_{p_{n_1}p_{n_2}} &=& e^{-\int_0^t(\rho(s)+\lambda(s))ds}\frac{\lambda(t)}{\eta_{n_1}\eta_{n_2}}\frac{\partial^2U_2}{\partial x^2}\left(t,x+\sum_{n=1}^M\frac{p_n}{\eta_n(t)}\right)<0\,.
\end{eqnarray*}
Thus, it is sufficient to obtain the optimal consumption and insurance $(c^*,p^*)$ by applying the first order conditions of optimality. Then from \eqref{hamiltonian} we have the following:

\begin{itemize}
  \item[$(i)$] The optimal consumption $c^*(t,x,y)$ is obtained from the following
\begin{equation*}
    -A_1(t)+e^{-\int_0^t(\rho(s)+\lambda(s))ds}\frac{\partial U_1}{\partial c}(t,c)=0\,.
\end{equation*}
From \eqref{inverse}, the optimal consumption can explicitly be obtained by
\begin{eqnarray*}
c^*(t,x,y) &=& I_1\left(t,\frac{A_1^*(t)}{\kappa_1}e^{\int_0^t(\rho(s)+\lambda(s))ds}\right) \\
&=& \left(\frac{A_1^*(t)}{\kappa_1}\right)^{\frac{1}{\delta-1}}e^{\frac{1}{\delta-1}\int_0^t(\rho(s)+\lambda(s))ds}\,;
\end{eqnarray*}

  \item[$(ii)$] The optimal premium insurance $p^*_n(t,x,y)$ is obtained using the Kuhn-Tucker conditions of optimality. As in Mousa {\it et al.} \cite{mousa}, we are looking for the solutions $(p_1(t,x,y);\ldots;p_M(t,x,y);\xi_1(t,x,y);\ldots;\xi_M(t,x,y))$ in the following system:
\begin{equation}\label{kuhn-tucker}
    \left\{
      \begin{array}{ll}
        -A_1(t)+\frac{\lambda(t)}{\eta_n(t)}e^{-\int_t^0(\rho(s)+\lambda(s))ds} \frac{\partial U_2}{\partial x}\left(t,x+\sum_{n=1}^M\frac{p_n}{\eta_n(t)}\right)=-\xi_n(t,x,y) & \hbox{} \\
        p_n(t,x,y)\geq0\,; \ \ \ \xi_n(t,x,y)\geq0\,; & \hbox{} \\
        p_n(t,x,y)\xi_n(t,x,y)=0, \ \ \ \forall n=1,2,\ldots M\,. & \hbox{}
      \end{array}
    \right.
\end{equation}
First, suppose that $n_1\neq n_2$. If we have $\xi_{n_1}(t,x,y)=\xi_{n_2}(t,x,y)$, for some $(t,x,y)\in[0,T]\times\mathbb{R}\times\mathbb{R}$, one must have $\eta_{n_1}(t)=\eta_{n_2}(t)$. Thus, from the assumption that all the insurance companies offer distinct contracts, we obtain that for every $n_1,n_2\in\{1,2,\ldots,M\}$, such that $n_1\neq n_2$, then $\xi_{n_1}(t,x,y)\neq\xi_{n_2}(t,x,y)$; $(t,x,y)\in[0,T]\times\mathbb{R}\times\mathbb{R}$, a.e. Therefore, there is at most one $n\in\{1,2,\ldots,M\}$ such that $p_n(t,x,y)\neq0$.

Then from the first equation in the system \eqref{kuhn-tucker},
$$
\eta_{n_1}(A_1(t)-\xi_{n_1}(t,x,y))=\eta_{n_2}(A_1(t)-\xi_{n_2}(t,x,y))\,.
$$
Hence, we can conclude that if $\xi_{n_1}(t,x,y)>\xi_{n_2}(t,x,y)$, then $\eta_{n_1}(t)>\eta_{n_2}(t)$. Moreover, if $\xi_{n_1}(t,x,y)=0$ for some $t\in[0,T]$, $\eta_{n_1}(t)<\eta_{n_2}(t)$, $\forall n_2\in\{1,2,\ldots,M\}$ such that $n_1\neq n_2$. From this point, let $n^*(t)=\arg\min_{n\in\{1,2,\ldots,M\}}\{\eta_n(t)\}$, then either $p_n(t,x,y)=0$ or $p_{n^*}(t,x,y)>0$ is the solution to the equation
$$
-A_1(t)+\frac{\lambda(t)}{\eta_{n^*}(t)}e^{-\int_t^0(\rho(s)+\lambda(s))ds}\frac{\partial U_2}{\partial x}\left(t,x+\frac{p_{n^*}}{\eta_{n^*}(t)}\right)=0,
$$
which gives the required solution
\begin{eqnarray*}
  p_n^*(t,x,y) &=& \left\{
                 \begin{array}{ll}
                   \max\left\{0,\,\left[I_2\left(t,\frac{\eta_n(t)}{\kappa_2\lambda(t)}A_1^*(t)e^{\int_0^t(\rho(s)+\lambda(s))ds}\right)-x \right] \right\}, & \hbox{if} \ \  n=n^*(t) \\
                   0, & \hbox{otherwise}\,,
                 \end{array}
               \right. \\
   &=& \left\{
                 \begin{array}{ll}
                   \max\left\{0,\,\eta_n(t)\left[\left(\frac{\eta_n(t)A_1^*(t)}{\kappa_2\lambda(t)}\right)^{\frac{1}{\delta-1}} e^{\frac{1}{\delta-1}\int_0^t(\rho(s)+\lambda(s))ds}-x \right] \right\}, & \hbox{if} \ \  n=n^*(t) \\
                   0, & \hbox{otherwise}\,;
                 \end{array}
               \right.
  \end{eqnarray*}

\item[$(iii)$] Since the expression involving $\pi$ in the Hamiltonian $\mathcal{H}$ \eqref{hamiltonian} is linear, for the maximum investment $\pi^*$, we have the following relation:
\begin{equation}\label{investmentr}
    \mu(t,y)A_1^*(t)+\beta(t,y)B_1^*(t)+\sigma(t,y)B_2^*(t)+\int_{\mathbb{R}}\gamma(t,y,z)D_1^*(t,z)\nu(dz)=0\,.
\end{equation}

  \end{itemize}
To obtain the optimal portfolio, we first solve the adjoint BSDE equations \eqref{adjoint1} and \eqref{adjoint2}. From the terminal condition of the adjoint equation \eqref{adjoint1}, we try the solution of the first adjoint equation $A_1^*(t)$ of the form
\begin{equation}\label{solutiona1}
    A_1^*(t)=X(t)^{\delta-1}e^{-h(t,Y(t))}\,, \ \ \ h(T,Y(T))=\int_0^T(\rho(u)+\lambda(u))du\,.
\end{equation}
On the other hand, for the optimal strategy $(c^*,p_n^*,\pi^*)$, we have

\begin{equation}\label{adjointo1}
dA_1^*(t)=-\eta_{n^*}A_1^*(t)dt+B_1^*(t)dW_1(t)+B_2^*(t)dW_2(t)+\int_{\mathbb{R}}D_1^*(t,z)\tilde{N}(dt,dz)\,.
\end{equation}
Applying the It\^o's product rule in \eqref{solutiona1} and from \eqref{wealth}, \eqref{consumption} and \eqref{insurance}, we obtain

\begin{eqnarray*}
   dA_1^*(t) &=& -x(t)^{\delta-1}e^{-h(t,y)}\Bigl\{h_t(t,y) +g(y)h_y(t,y)+\frac{1}{2}h_{yy}(t,y)-\frac{1}{2}(h_y(t,y))^2 \\
  && +\frac{1}{2}(\delta-1)\pi^*(t)\beta(t,y)h_y(t,y) -\Bigl[(\delta-1)[r(t)+\mu(t,y)\pi^*(t)+\eta_{n^*}(t)] \\
   && +\frac{1}{2}(\delta-1)(\delta-2)(\pi^*(t))^2 (\beta^2(t,y)+\sigma^2(t,y)) \\
   && +\int_{\mathbb{R}}\bigl[(1+\pi^*(t)\gamma(t,y,z))^{\delta-1}-1-(\delta-1)\pi^*(t)\gamma(t,y,z)\bigl]\nu(dz)\Bigl] \\
   &&  +(1-\delta)e^{\frac{1}{1-\delta}h(t,y)}e^{\int_0^t(\rho(s)+\lambda(s))ds}\Bigl[1+\eta_{n^*}(t) \left(\frac{\eta_{n^*}(t)}{\kappa_2\lambda(t)}\right)^{\frac{1}{\delta-1}}\Bigl]\Bigl\}dt \\
   && +((\delta-1)\pi^*(t)\beta(t,y)-h_y(t,y))x(t)^{\delta-1}e^{-h(t,y)}dW_1(t) \\
   && +(\delta-1)\pi^*(t)x(t)^{\delta-1}\sigma(t,y)e^{-h(t,y)}dW_2(t) \\
   && +x(t)^{\delta-1}e^{-h(t,y)}\int_{\mathbb{R}}\bigl[(1+\pi^*(t)\gamma(t,y,z))^{\delta-1}-1\bigl]\tilde{N}(dt,dz)\,.
\end{eqnarray*}
Comparing with the adjoint equation \eqref{adjointo1}, we get:

\begin{eqnarray}\label{optimalb1}
  B^*_1(t) &=& ((\delta-1)\pi^*(t)\beta(t,y)-h_y(t,y))x(t)^{\delta-1}e^{-h(t,y)}\,; \\ \label{optimalb2}
  B_2^*(t) &=& (\delta-1)\pi^*(t)\sigma(t,y)x(t)^{\delta-1}e^{-h(t,y)}\,; \\ \label{optimald1}
  D_1^*(t) &=& x(t)^{\delta-1}e^{-h(t,y)}\bigl[(1+\pi^*(t)\gamma(t,y,z))^{\delta-1}-1\bigl]\,.
\end{eqnarray}
Furthermore, $h$ is a solution of the following backward partial differential equation (PDE)
\begin{eqnarray}\nonumber
    h_t(t,y) +(g(y)+\frac{1}{2}(\delta-1)\pi^*(t)\beta(t,y))h_y(t,y) +\frac{1}{2}h_{yy}(t,y) -\frac{1}{2}(h_y(t,y))^2 && \\ \label{ode}
     +K(t) + (1-\delta)e^{\frac{1}{1-\delta}h(t,y)}e^{\int_0^t(\rho(s)+\lambda(s))ds}\Bigl[1+\eta_{n^*}(t) \left(\frac{\eta_{n^*}(t)}{\kappa_2\lambda(t)}\right)^{\frac{1}{\delta-1}}\Bigl] &=& 0\,,
\end{eqnarray}
with a terminal condition $h(T,Y(T))=e^{-\int_0^T(\rho(u)+\lambda(u))du}$, where
\begin{eqnarray*}
  K(t) &=& -(\delta-1)[r(t)+\mu(t,y)\pi^*(t)+\delta\eta_{n^*}(t) \\
  && +\frac{1}{2}(\delta-1)(\delta-2)(\pi^*(t))^2 (\beta^2(t,y)+\sigma^2(t,y)) \\
   && +\int_{\mathbb{R}}\bigl[(1+\pi^*(t)\gamma(t,y,z))^{\delta-1}-1-(\delta-1)\pi^*(t)\gamma(t,y,z)\bigl]\nu(dz)\,.
\end{eqnarray*}
The solution of the above equation can be approximated by a fixed point algorithm. To that end, we define a Feynman-Kac operator $\Phi$, acting on functions $h$ as follows
\begin{eqnarray*}
(\Phi h)(t,Y(t)) &=& \mathbb{E}\Bigl[e^{Q(t,Y(t))} +\int_t^Te^{Q(s,Y(s))}\Bigl\{K(s) \\
&&  + (1-\delta)e^{\frac{1}{1-\delta}h(s,Y(s))}e^{\int_0^s(\rho(u)+\lambda(u))du}\Bigl[1+\eta_{n^*}(s) \left(\frac{\eta_{n^*}(s)}{\kappa_2\lambda(s)}\right)^{\frac{1}{\delta-1}}\Bigl]\Bigl\}ds\Bigl]\,,
\end{eqnarray*}
where $Q(t,Y(t))=g(Y(t))+\frac{1}{2}(\delta-1)\pi^*(t)\beta(t,Y(t))$.

To solve \eqref{ode}, one need to find a fixed point solution, for the following fixed point equation
$$
(\Phi h)(t,y)=h(t,y).
$$
Then, under the Assumptions $({\bf A1})$ and $({\bf A2})$, there exists a unique solution $h\in \mathcal{C}^{1,2}([0,T]\times\mathbb{R})$ of the PDE \eqref{ode}. (See, Berdjane and Pergamenshchikov \cite{Berdjane}, Theorem 3.1.)\\

Now, substituting \eqref{solutiona1}, \eqref{optimalb1}, \eqref{optimalb2}, \eqref{optimald1} into \eqref{investmentr}, we obtain
\begin{eqnarray*}
&& \beta(t,y)h_y(t,y) -\Bigl\{\mu(t,y)+(\delta-1)\pi^*(t)(\beta^2(t,y)+\sigma^2(t,y)) \\
&& +\int_{\mathbb{R}}\gamma(t,y,z)\bigl[(1+\pi^*(t)\gamma(t,y,z))^{\delta-1}-1\bigl]\nu(dz)\Bigl\} \,=\,0\,.
\end{eqnarray*}
Note that the second derivative in $\pi$, for the optimal $A_1^*,B_1^*,B_2^*$ and $D_1^*$ is negative, i.e.,
$$
-(1-\delta)\Bigl[\beta^2(t,y)+\sigma^2(t,y) +\int_{\mathbb{R}}(1+\pi(t)\gamma(t,y,z))^{\delta-2}\gamma^2(t,y,z)\nu(dz)\Bigl]<0.
$$
Hence, there exists an optimal $\pi^*(t)\in(0,1)$.

\end{proof}

For the second adjoint equation, note that from \eqref{investmentr}, we obtain the following relation
\begin{equation}\label{investmentrd}
    \frac{\partial\mu}{\partial y}(t,y)A_1^*(t) +\frac{\partial\beta}{\partial y}(t,y)B_1^*(t) +\frac{\partial\sigma(t,y)}{\partial y}B_2^*(t)+\int_{\mathbb{R}}\frac{\partial\gamma}{\partial y}(t,y,z)D_1^*(t,z)\nu(dz)=0\,.
\end{equation}

Then, for optimal strategy, the second adjoint equation \eqref{adjoint2}, can be written as
\begin{equation}\label{adjointo2}
dA_2^*(t)=-g'(y)A_2^*(t)dt+B_3^*(t)dW_1(t) +B_4^*(t)dW_1(t) +\int_{\mathbb{R}}D_2^*(t,z)\tilde{N}(dt,dz)\,,
\end{equation}
which is a linear BSDE with jumps. Since the terminal condition is $A^*(T)=0$, by applying the techniques for solving linear BSDE with jumps (Delong \cite{Delong}, Propositions 3.3.1 and 3.4.1), we obtain $A_2^*(t)=B_3^*(t)=B_4^*(t)=D_2^*(t,z)=0$.

The corresponding wealth process equation \eqref{wealth} for the optimal solutions becomes

\begin{equation*}
  dX^*(t) = X(t)\Bigl[G(t)dt+\pi^*(t)[\beta(t,y)dW_1(t)+\sigma(t,y)dW_2(t)] +\pi^*(t)\int_{\mathbb{R}}\gamma(t,y,z)\tilde{N}(dt,dz)\Bigl],
\end{equation*}
where
$$
G(t)=r(t)+\pi^*(t)\mu(t,y)+\eta_{n^*}(t)-h(t)^{\frac{1}{\delta-1}}\left[\kappa_1^{-\frac{1}{\delta-1}} +\left(\frac{\eta_{n^*}(t)}{\kappa_2\lambda(t)}\right)^{\frac{1}{\delta-1}} e^{\frac{1}{\delta-1}\int_0^t(\rho(s)+\lambda(s))ds}\right]\,,
$$
which gives the following solution

\begin{eqnarray*}
  X(t) &=& x\exp\Bigl\{\int_0^t[G(s)-\frac{1}{2}(\pi^*(s))^2(\beta^2(s,y)+\sigma^2(s,y))]ds \\
  &&  +\int_0^t\int_{\mathbb{R}}[\ln(1+\pi^*(s)\gamma(s,y,z))-\pi^*(s)\gamma(s,y,z)]\nu(dz)ds \\
   && +\int_0^t\pi^*(s)[\beta(s,y)dW_1(s)+\sigma(s,y)dW_2(s)] \\
   &&  +\int_0^t\int_{\mathbb{R}}\ln(1+\pi^*(s)\gamma(s,y,z))\tilde{N}(ds,dz)\Bigl\}\,.
\end{eqnarray*}

Finally, the value function of the problem \eqref{valuefunction} can be characterized as the solution of the following BSDE

\begin{eqnarray*}
  dV(t,x,y) &=& -\mathcal{H}(t,x^*,y,c^*,\pi^*,p^*,A_1^*,A_2^*,B_1^*,B_2^*,D_1^*)dt+B_1^*(t)dW_1(t)+B_2^*(t)dW_2(t) \\
   && +\int_{\mathbb{R}}D_1^*(t,z)\tilde{N}(dt,dz)\,; \\
  V(T,x,y) &=& \kappa_3e^{\int_0^T[\rho(t)+\lambda(t)]dt}\frac{X(T)^{\delta}}{\delta}\,.
\end{eqnarray*}

\begin{example}
The following example specifies the results in Theorem \ref{theoremmulti} to a well known stochastic volatility model of Ornstein-Uhlenbeck type and an explicit portfolio strategy is derived. Let $N$ be the Poisson process, with intensity $\nu>0$. We consider the following model dynamics
\begin{eqnarray*}
  B(t) &=& 1\,; \\
  dS(t) &=& S(t)[(\alpha_0+\alpha_1Y(t))dt+\gamma Y(t) d\tilde{N}(t)]\,; \\
  dY(t) &=& -bY(t)dt+dW(t)\,,
\end{eqnarray*}
where $\alpha_0,\alpha_1,\gamma\in\mathbb{R}$ and $b>0$.
Suppose that we have a constant mortality rate $\lambda>0$, constants insurance premium rate $\eta_n>0, \ \  n=1,2,\ldots,M$, discount rate $\rho>0$ and $\kappa_1=\kappa_1=\kappa_3=1$. Then the Hamiltonian is given by
\begin{eqnarray*}
  && \mathcal{H}(t,X(t),Y(t),A_1(t),A_2(t),B(t),D_1(t),D_2(t)) \\
   &=& \frac{1}{\delta}e^{-(\rho+\lambda)t}\Bigl[(c(t))^{\delta}+\lambda(X+\sum_{n=1}^M\frac{p_n(t)}{\eta_n})^{\delta}\Bigl] \\
   && +[X(t)\pi(t)(\alpha_0+\alpha_1Y(t))-c(t)-\sum_{n=1}^M\frac{p_n(t)}{\eta_n}]A_1(t) \\
   &&  -bY(t)A_2(t)+B(t)+\pi(t)X(t)\gamma Y(t) D_1(t)\nu+D_2(t)\,.
\end{eqnarray*}
Then, following Theorem \ref{theoremmulti}, we can easily see that the optimal portfolio is given by
$$
\pi^*(t)=\frac{1}{\delta}\Bigl[\Bigl(\frac{\gamma\nu y-\alpha_0-\alpha_1y}{\gamma\nu y}\Bigl)^{\frac{1}{\delta-1}}-1\Bigl]\,,
$$
where $y$ is given by $Y(t)=e^{-bt}y_0+\int_0^te^{-b(t-s)}dW(s)$.\\
Moreover, the optimal consumption and insurance are given by
$$
c^*(t)=e^{\frac{1}{\delta-1}(\rho+\lambda)t}(A_1^*(t))^{\frac{1}{\delta-1}}\,, \ \ \ p_{n^*}^*(t)=\Bigl[ \Bigl(\frac{\eta_{n^*}}{\lambda}A_1^*(t)\Bigl)^{\frac{1}{\delta-1}} e^{\frac{1}{\delta-1}(\rho+\lambda)t}-x\Bigl]\,,
$$
where $A_1^*(t)$ is part of a solution of the following linear BSDE
$$
dA_1^*(t)=-\eta_{n^*}A^*_1(t)dt+B^*(t)dW(t)+D^*(t)d\tilde{N}(t).
$$
Hence, $A_1^*(t)=e^{-\rho T}\mathbb{E}\Bigl[e^{\eta_{n^*}(T-t)}(X(T))^{\delta-1}\mid\mathcal{F}_t\Bigl]$. $B^*$ and $D^*$ can be derived by the martingale representation theorem. See Delong \cite{Delong}, Propositions 3.3.1 and 3.4.1. Thus, for this pure jump Poisson process of Ornstein-Uhlenbeck type, we have derived an explicit optimal portfolio strategy.
\end{example}

\subsection*{Acknowledgment}

We would like to express our deep gratitude to the NRF Project No: CSUR 90313, the University of Pretoria and the MCTESTP Mozambique for their support.

\section*{Appendix. Proof of the main results}

\noindent {\it Proof of Theorem \ref{sufficientmp}.}
Let $\pi\in\mathcal{A}$ be an admissible strategy and $X(t)$ the corresponding wealth process. Then, following Framstad {\it et. al.} \cite{framstard}, Theorem 2.1., we have:

\begin{eqnarray*}
 \mathcal{J}(\pi^*)-\mathcal{J}(\pi)
  &=& \mathbb{E}\Bigl[\int_0^T(f(t,X^*(t),Y^*(t),\pi^*(t))-f(t,X(t),Y(t),\pi(t)))dt \\
  && +(g(X^*(T),Y^*(T))-g(X(T),Y(T))) \Bigl] \\
  &=:& \mathcal{K}_1+\mathcal{K}_2\,.
\end{eqnarray*}
By condition $(i)$ and the integration by parts rule (Oksendal and Sulem \cite{Oksendal}, Lemma 3.6.), we have

\begin{eqnarray*}
  \mathcal{K}_2 &=& \mathbb{E}\Bigl[g(X^*(T),Y^*(T))-g(X(T),Y(T))\Bigl] \\
  &\geq& \mathbb{E}\Bigl[(X^*(T)-X(T))A^*_1(T)+(Y^*(T)-Y(T))A^*_2(T)\Bigl]  \\
   &=& \mathbb{E}\Bigl[\int_0^T(X^*(t)-X(t))dA^*_1(t)+\int_0^TA^*_1(t)(dX^*(t)-dX(t)) \\
   && +\int_0^T(Y^*(t)-Y(t))dA^*_2(t)+\int_0^TA^*_2(t)(dY^*(t)-dY(t)) \\
   && +\int_0^T[(\sigma(t,X^*(t),Y^*(t),\pi^*(t))- \sigma(t,X(t),Y(t),\pi(t)))B^*_1(t) \\
  &&  +  (\beta(t,X^*(t),Y^*(t),\pi^*(t))- \sigma(t,X(t),Y(t),\pi(t)))B^*_2(t)]dt \\
   && + \int_0^T\int_{\mathbb{R}}(\gamma(t,X^*(t),Y^*(t),\pi^*(t),z)-\gamma(t,X^*(t),Y^*(t),\pi^*(t),z))D_1^*(t,z)\nu(dz)dt \\
   &&  +\int_0^T(\phi(Y^*(t))-\phi(Y(t)))B_3^*(t)dt\Bigl] \\
   &=& \mathbb{E}\Bigl[-\int_0^T(X^*(t)-X(t))\frac{\partial \mathcal{H}^*}{\partial x}(t)dt -\int_0^T(Y^*(t)-Y(t))\frac{\partial \mathcal{H}^*}{\partial y}(t)dt \\
   && + \int_0^T(A_1^*(t)b(t,X^*(t),Y^*(t),\pi^*(t))-b(t,X(t),Y(t),\pi(t)))dt \\
   && + \int_0^T(\varphi(Y^*(t))-\varphi(Y(t)))A_2^*(t)dt +\int_0^T(\phi(Y^*(t))-\phi(Y(t)))B_3^*(t)dt \\
   && +\int_0^T[(\sigma(t,X^*(t),Y^*(t),\pi^*(t))- \sigma(t,X(t),Y(t),\pi(t)))B^*_1(t) \\
  &&  +  (\beta(t,X^*(t),Y^*(t),\pi^*(t))- \sigma(t,X(t),Y(t),\pi(t)))B^*_2(t)]dt \\
   && + \int_0^T\int_{\mathbb{R}}(\gamma(t,X^*(t),Y^*(t),\pi^*(t),z) -\gamma(t,X^*(t),Y^*(t),\pi^*(t),z))D_1^*(t,z)\nu(dz)dt\Bigl]\,,
\end{eqnarray*}
where we have used the notation $$\mathcal{H}^*(t)=\mathcal{H}(t,X^*(t),Y^*(t),\pi^*(t),A_1^*(t),A_2^*(t),B_1^*(t),B_2^*(t),B_3^*(t),D_1^*(t,\cdot))\,.$$

On the other hand, by definition of $\mathcal{H}$ in \eqref{hamiltoniangeneral}, we see that

\begin{eqnarray*}
 \mathcal{K}_1 &=& \mathbb{E}\Bigl[\int_0^T(f(t,X^*(t),Y^*(t),\pi^*(t))-f(t,X(t),Y(t),\pi(t)))dt\Bigl] \\
   &=& \mathbb{E}\Bigl[\int_0^T [\mathcal{H}(t,X^*(t),Y^*(t),\pi^*(t),A_1^*(t),A_2^*(t),B_1^*(t),B_2^*(t),B_3^*(t),D_1^*(t,\cdot))  \\
   && - \mathcal{H}(t,X^*(t),Y^*(t),\pi^*(t),A_1^*(t),A_2^*(t),B_1^*(t),B_2^*(t),B_3^*(t),D_1^*(t,\cdot))]dt  \\
   && -\int_0^TA_1^*(t)(A_1^*(t)b(t,X^*(t),Y^*(t),\pi^*(t))-b(t,X(t),Y(t),\pi(t)))dt \\
   && - \int_0^T(\varphi(Y^*(t))-\varphi(Y(t)))A_2^*(t)dt +\int_0^T(\phi(Y^*(t))-\phi(Y(t)))B_3^*(t)dt \\
   && -\int_0^T[(\sigma(t,X^*(t),Y^*(t),\pi^*(t))- \sigma(t,X(t),Y(t),\pi(t)))B^*_1(t) \\
  &&  -  (\beta(t,X^*(t),Y^*(t),\pi^*(t))- \sigma(t,X(t),Y(t),\pi(t)))B^*_2(t)]dt \\
   && - \int_0^T\int_{\mathbb{R}}(\gamma(t,X^*(t),Y^*(t),\pi^*(t),z) -\gamma(t,X^*(t),Y^*(t),\pi^*(t),z))D_1^*(t,z)\nu(dz)dt\Bigl]\,.
\end{eqnarray*}
Then, summing the above two expressions, we obtain
\begin{eqnarray*}
 && \mathcal{K}_1+\mathcal{K}_2 \\
  &=&  \mathbb{E}\Bigl[\int_0^T [\mathcal{H}(t,X^*(t),Y^*(t),\pi^*(t),A_1^*(t),A_2^*(t),B_1^*(t),B_2^*(t),B_3^*(t),D_1^*(t,\cdot))  \\
   && - \mathcal{H}(t,X(t),Y(t),\pi(t),A_1^*(t),A_2^*(t),B_1^*(t),B_2^*(t),B_3^*(t),D_1^*(t,\cdot))]dt \\
   &&  -\int_0^T(X^*(t)-X(t))\frac{\partial \mathcal{H}^*}{\partial x}(t)dt -\int_0^T(Y^*(t)-Y(t))\frac{\partial \mathcal{H}^*}{\partial y}(t)dt\,.
\end{eqnarray*}

By the concavity of $\mathcal{H}$, i.e., conditions $(i)$ and $(ii)$, we have
\begin{eqnarray*}
   && \mathbb{E}\Bigl[\int_0^T [\mathcal{H}(t,X^*(t),Y^*(t),\pi^*(t),A_1^*(t),A_2^*(t),B_1^*(t),B_2^*(t),B_3^*(t),D_1^*(t,\cdot))  \\
   && - \mathcal{H}(t,X(t),Y(t),\pi(t),A_1^*(t),A_2^*(t),B_1^*(t),B_2^*(t),B_3^*(t),D_1^*(t,\cdot))]dt\Bigl]  \\
   &\geq&  \mathbb{E}\Bigl[\int_0^T(X^*(t)-X(t))\frac{\partial \mathcal{H}^*}{\partial x}(t)dt +\int_0^T(Y^*(t)-Y(t))\frac{\partial \mathcal{H}^*}{\partial y}(t)dt \\
   && +\int_0^T(\pi^*(t)-\pi(t))\frac{\partial \mathcal{H}^*}{\partial \pi}(t)dt\Bigl]\,.
\end{eqnarray*}

Then, by the maximality of the strategy $\pi^*\in\mathcal{A}$ and the concavity of the Hamiltonian $\mathcal{H}$,
\begin{eqnarray*}
   && \mathbb{E}\Bigl[\int_0^T [\mathcal{H}(t,X^*(t),Y^*(t),\pi^*(t),A_1^*(t),A_2^*(t),B_1^*(t),B_2^*(t),B_3^*(t),D_1^*(t,\cdot))  \\
   && - \mathcal{H}(t,X(t),Y(t),\pi(t),A_1^*(t),A_2^*(t),B_1^*(t),B_2^*(t),B_3^*(t),D_1^*(t,\cdot))]dt\Bigl]  \\
   &\geq&  \mathbb{E}\Bigl[\int_0^T(X^*(t)-X(t))\frac{\partial \mathcal{H}^*}{\partial x}(t)dt +\int_0^T(Y^*(t)-Y(t))\frac{\partial \mathcal{H}^*}{\partial y}(t)dt\Bigl]\,.
\end{eqnarray*}
Hence  $\mathcal{J}(\pi^*)-\mathcal{J}(\pi)=\mathcal{K}_1+\mathcal{K}_2\geq0$. Therefore, $\mathcal{J}(\pi^*)\geq \mathcal{J}(\pi)$, that is, the strategy $\pi^*\in\mathcal{A}$ is optimal.
\begin{flushright}
$\square$
\end{flushright}

\noindent {\it Proof of Theorem \ref{necessarymp}.}
From \eqref{functionalgeneral}, we have that
\begin{eqnarray*}
  \frac{d}{d\ell}\mathcal{J}(\pi+\ell\zeta)\Bigl|_{\ell=0} &=& \mathbb{E}\Bigl[\int_0^T\Bigl(\frac{\partial f}{\partial x}(t)x_1(t) +\frac{\partial f}{\partial y}(t)y_1(t) +\frac{\partial f}{\partial \pi}(t)\zeta(t)\Bigl)dt  \\
   &&  +\frac{\partial g}{\partial x}(X(T),Y(T))x_1(T) +\frac{\partial g}{\partial y}(X(T),Y(T))y_1(T)\Bigl]_{\ell=0}\,.
\end{eqnarray*}
Let $$ I(t):=\mathbb{E}\Bigl[\frac{\partial g}{\partial x}(X(T),Y(T))x_1(T) +\frac{\partial g}{\partial y}(X(T),Y(T))y_1(T)\Bigl]\,.$$
By It\^os formula and the dynamics of $x_1$ and $y_1$, we get

\begin{eqnarray}\nonumber
  I(t) &=& \mathbb{E}\Bigl[A_1(T)x_1(T)+A_2(T)y_1(T)\Bigl] \\ \label{itonecessary}
   &=& \mathbb{E}\Bigl[\int_0^Tx_1(t)\Bigl(A_1(t)\frac{\partial b}{\partial x}(t) +B_1(t)\frac{\partial \sigma}{\partial x}(t) +B_2(t)\frac{\partial \beta}{\partial x}(t) \\ \nonumber
   && +\int_{\mathbb{R}}\frac{\partial \gamma}{\partial x}(t,z)D_1(t,z)\nu(dz) -\frac{\partial \mathcal{H}}{\partial x}(t)\Bigl)dt \\ \nonumber
   && +\int_0^Ty_1(t)\Bigl(A_1(t)\frac{\partial b}{\partial y}(t) +A_2(t)\varphi'(Y(t)) +B_1(t)\frac{\partial \sigma}{\partial y}(t) +B_2(t)\frac{\partial \beta}{\partial y}(t) \\ \nonumber
   && +B_4(t)\phi'(Y(t)) +\int_{\mathbb{R}}\frac{\partial \gamma}{\partial y}(t,z)D_1(t,z)\nu(dz) -\frac{\partial \mathcal{H}}{\partial y}(t)\Bigl)dt \\ \nonumber
   && +\int_0^T\zeta(t)\Bigl(A_1(t)\frac{\partial b}{\partial \pi}(t) +B_1(t)\frac{\partial \sigma}{\partial \pi}(t) +B_2(t)\frac{\partial \beta}{\partial \pi}(t) \\ \nonumber
   && +\int_{\mathbb{R}}\frac{\partial \gamma}{\partial \pi}(t,z)D_1(t,z)\nu(dz)(t)\Bigl)dt\Bigl]\,.
\end{eqnarray}
On the other hand, by definition of the Hamiltonian \eqref{hamiltoniangeneral}, we have
\begin{eqnarray*}
  \nabla_{x,y,\pi}\mathcal{H}(t) &=& \frac{\partial \mathcal{H}}{\partial x}(t)x_1(t) +\frac{\partial \mathcal{H}}{\partial y}(t)y_1(t) +\frac{\partial \mathcal{H}}{\partial \pi}(t)\zeta(t) \\
   &=& x_1(t)\Bigl[\frac{\partial f}{\partial x}(t) +A_1(t)\frac{\partial b}{\partial x}(t) +B_1(t)\frac{\partial \sigma}{\partial x}(t) +B_2(t)\frac{\partial \beta}{\partial x}(t) \\
   && +\int_{\mathbb{R}}\frac{\partial \gamma}{\partial x}(t,z)D_1(t,z)\nu(dz)\Bigl] \\
   &&  +y_1(t)\Bigl[\frac{\partial f}{\partial y}(t) +A_1(t)\frac{\partial b}{\partial y}(t) +A_2(t)\varphi'(Y(t)) +B_1(t)\frac{\partial \sigma}{\partial y}(t) +B_2(t)\frac{\partial \beta}{\partial y}(t) \\
   && +B_3(t)\phi'(Y(t)) +\int_{\mathbb{R}}\frac{\partial \gamma}{\partial x}(t,z)D_1(t,z)\nu(dz)\Bigl] \\
   && + \zeta(t)\Bigl[\frac{\partial f}{\partial \pi}(t) +A_1(t)\frac{\partial b}{\partial \pi}(t) +B_1(t)\frac{\partial \sigma}{\partial \pi}(t) +B_2(t)\frac{\partial \beta}{\partial \pi}(t) \\
   && +\int_{\mathbb{R}}\frac{\partial \gamma}{\partial \pi}(t,z)D_1(t,z)\nu(dz)\Bigl]\,.
\end{eqnarray*}
Combining this and \eqref{itonecessary}, we get
\begin{equation*}
    \frac{d}{d\ell}\mathcal{J}(\pi+\ell\zeta)\Bigl|_{\ell=0}=\mathbb{E}\Bigl[\int_{\mathbb{R}} \frac{\partial \mathcal{H}}{\partial \pi}(t)\zeta(t)dt\Bigl]\,,
\end{equation*}
Then, we conclude that
$$ \frac{d}{d\ell}\mathcal{J}(\pi+\ell\zeta)\Bigl|_{\ell=0}=0 \ \ \rm{if \ and \ only \ if} \ \ \mathbb{E}\Bigl[\int_0^T \frac{\partial \mathcal{H}}{\partial \pi}(t)\zeta(t)dt\Bigl]=0\,,$$
for all bounded $\zeta\in\mathcal{A}$.

Applying this for a particular case $\zeta(t)=\xi(t)$, we get that
$$
\mathbb{E}\Bigl[ \frac{\partial \mathcal{H}}{\partial \pi}(t)\zeta(t)\mid\mathcal{F}_t\Bigl]=0\,,
$$
which completes the proof.
\begin{flushright}
$\square$
\end{flushright}

\end{document}